\documentclass[11pt]{article}

\usepackage[a4paper,margin=1in]{geometry}
\usepackage{amsmath,amssymb,amsthm}
\usepackage{hyperref}
\usepackage{enumitem}
\usepackage{booktabs}
\usepackage{mathtools}
\usepackage{url}
\hypersetup{colorlinks=true, linkcolor=blue, citecolor=blue, urlcolor=blue}

\title{\textbf{Why Atomicity Matters to AI/ML Infrastructure:\\
Snapshots, Firmware Updates, and the Cost of the\\
Forward-In-Time-Only Category Mistake}}

\author{
Paul Borrill \\
D{\AE}D{\AE}LUS \\
\texttt{paul@daedaelus.com}
}

\date{March 2026}

\newtheorem{theorem}{Theorem}[section]
\newtheorem{lemma}[theorem]{Lemma}
\newtheorem{corollary}[theorem]{Corollary}
\newtheorem{proposition}[theorem]{Proposition}
\newtheorem{definition}[theorem]{Definition}
\newtheorem{remark}[theorem]{Remark}
\newtheorem{example}[theorem]{Example}

\begin{document}

\maketitle

\begin{abstract}
Large-scale AI/ML training systems depend on two assumptions that are rarely
examined: (1)~that checkpoints represent atomic snapshots of global training
state, and (2)~that infrastructure updates can be applied without inducing
mixed-protocol cluster states.  Both assumptions are instances of a deeper
structural error: the Forward-In-Time-Only (FITO) category mistake, which
confuses protocol convergence properties with temporal predicates.

We formalize this confusion as a type error: the identification of a temporal
snapshot $\mathsf{Snap}(t)$ with a convergence property
$\mathsf{Conv}(\mathcal{P},e)$.  We model checkpoint execution in a
process-algebraic framework and prove that under asynchronous composition with
crash-recovery failures, no temporal instant can serve as an atomicity
boundary.  We reformulate checkpoint inconsistency on an epoch lattice and show
that atomicity is a measure-zero event whose complement grows exponentially
with the number of independent persistence domains.  We formalize mixed-epoch
recovery as a type violation in the optimization algebra and show that the
resulting update is not a valid step of any standard optimizer.

For firmware fleet updates, we strengthen the known consensus-hardness result:
atomic deployment requires not merely agreement but \emph{common knowledge} of
the epoch transition, which is strictly unattainable in asynchronous systems
with unreliable communication.

We conclude by sketching a bilateral convergence protocol, inspired by Open
Atomic Ethernet, that achieves $\mathsf{Conv}(\mathcal{P},e)$ without
requiring $\mathsf{Snap}(t)$---replacing the FITO assumption with constraint
semantics.
\end{abstract}

\section{Introduction}

Modern frontier AI training runs involve tens of thousands of GPUs,
multi-tier storage stacks, programmable SmartNICs, RDMA fabrics,
distributed checkpoint systems, and globally replicated metadata
services.  Despite this complexity, the infrastructure rests on two
assumptions that are almost never stated explicitly:

\begin{enumerate}[label=(\roman*)]
\item \textbf{Atomic snapshot assumption.} A checkpoint represents a
  single coherent global state---there exists a time $t_c$ at which
  ``everything was saved.''
\item \textbf{Atomic deployment assumption.} Firmware and
  infrastructure updates can be applied to the cluster without any
  reachable execution in which different nodes operate under different
  protocol semantics.
\end{enumerate}

Both assumptions are false.  More precisely, both commit the same
structural error.  They treat a distributed protocol convergence
property---something that must be \emph{constructed} by coordinated
action across failure domains---as though it were a temporal
predicate, a fact about what is true at a clock time.

This confusion has a name.  Gilbert Ryle introduced the term
\emph{category mistake} for the error of treating a concept belonging
to one logical type as though it belongs to another~\cite{ryle1949}.
In prior work, we identified a pervasive instance: the
Forward-In-Time-Only (FITO) assumption, which treats causal
dependency relations as temporal propagation processes and thereby
misclassifies convergence properties as temporal
predicates~\cite{borrill_message_passing,borrill_unix_tools,borrill_icloud}.

This paper applies the FITO analysis to AI/ML infrastructure.  The
contributions are:

\begin{enumerate}
\item \textbf{The category mistake formalized.}  We define the FITO
  error precisely as a type confusion between $\mathsf{Snap}(t)$
  (temporal predicate) and $\mathsf{Conv}(\mathcal{P},e)$ (protocol
  convergence property), and show that the two belong to different
  logical types (Section~\ref{sec:fito}).

\item \textbf{Process-algebraic model.}  We model distributed
  checkpointing as asynchronous process composition and prove that the
  committed state is a trace property, not a state predicate at any
  single time (Section~\ref{sec:model}).

\item \textbf{Impossibility of a temporal snapshot boundary.}  We
  prove, with an explicit failure model, that no checkpoint protocol
  can guarantee atomicity under crash-recovery failures by designating
  a temporal boundary (Section~\ref{sec:temporal}).

\item \textbf{Epoch lattice and measure-zero atomicity.}  We
  reformulate checkpoint inconsistency on an epoch lattice and show
  that atomic states are measure-zero as the number of persistence
  domains grows (Section~\ref{sec:lattice}).

\item \textbf{Semantic type violation.}  We formalize mixed-epoch
  recovery as a type error in the optimization algebra, showing that
  the resulting update is not a valid step of any standard optimizer
  (Section~\ref{sec:type}).

\item \textbf{Common knowledge impossibility.}  We strengthen the
  consensus-number argument for firmware updates: atomic deploy
  requires common knowledge of the epoch transition, which is strictly
  unattainable under asynchrony (Section~\ref{sec:ck}).

\item \textbf{Bilateral convergence protocol.}  We sketch a
  constructive protocol achieving convergence without temporal
  boundaries, connecting to the constraint semantics of Open Atomic
  Ethernet (Section~\ref{sec:constructive}).
\end{enumerate}

\section{The FITO Category Mistake}
\label{sec:fito}

We begin by making the category mistake precise.

\begin{definition}[Category mistake (Ryle)]
A category mistake is the presentation of facts belonging to one
logical type in the idiom appropriate to another~\cite{ryle1949}.
\end{definition}

\begin{definition}[FITO ordering {\cite{borrill_message_passing}}]
\label{def:fito}
A FITO ordering is a strict partial order $(E, \prec)$ over a set of
events $E$, interpreted as ``happens-before'' precedence, where
$\prec$ is irreflexive, transitive, and antisymmetric.  The FITO
interpretation asserts that causal influence is aligned with $\prec$
and that admissible system narratives must proceed forward along this
relation.
\end{definition}

\begin{definition}[Temporal snapshot predicate]
Let $N = \{1, \ldots, n\}$ be a set of distributed components, each
with local state $S_i(t)$ parameterized by clock time.  The temporal
snapshot predicate for epoch $e$ is
\[
  \mathsf{Snap}(t, e) \;\equiv\;
  \forall\, i \in N,\quad S_i(t) \in \mathsf{Committed}(e).
\]
This is a proposition about a \emph{temporal instant}: it asserts that
at clock time $t$, every component reflects epoch~$e$.
\end{definition}

\begin{definition}[Protocol convergence property]
Let $\mathcal{P}$ be a distributed checkpoint protocol operating over
components $N$.  The convergence property for epoch $e$ is
\[
  \mathsf{Conv}(\mathcal{P}, e) \;\equiv\;
  \exists\, \text{execution prefix } \alpha \text{ of } \mathcal{P}
  \text{ such that, after } \alpha, \;
  \forall\, i \in N,\; S_i \in \mathsf{Committed}(e)
\]
and this commitment is stable under the failure model (no subsequent
admissible failure can cause any component to revert from epoch~$e$).
This is a proposition about a \emph{protocol execution}: it asserts
that the protocol has reached a fixed point.
\end{definition}

\begin{proposition}[The FITO category mistake in checkpointing]
\label{prop:category}
The identification
\[
  \mathsf{Snap}(t, e) \;\equiv\; \mathsf{Conv}(\mathcal{P}, e)
\]
is a category mistake.  $\mathsf{Snap}(t,e)$ is a temporal predicate
(logical type: proposition about a time point).
$\mathsf{Conv}(\mathcal{P},e)$ is a protocol property (logical type:
proposition about a computation history).  The two belong to different
logical types and cannot be identified without committing Ryle's
error.
\end{proposition}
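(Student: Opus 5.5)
To prove Proposition~\ref{prop:category} we make ``logical type'' precise and show that the two sides of the purported identification are not even comparable without a coercion that discards information. Concretely, we view each predicate as a function and read off its signature. $\mathsf{Snap}(\cdot, e)$ is a map $\mathbb{T} \to \mathsf{Prop}$, where $\mathbb{T}$ is the clock-time domain. Its truth value depends only on the tuple $(S_1(t),\ldots,S_n(t))$, so it factors through the product of instantaneous states. $\mathsf{Conv}(\mathcal{P}, \cdot)$ is instead evaluated on the set $\mathrm{Exec}(\mathcal{P})$ of admissible executions of $\mathcal{P}$ under the failure model. It contains an existential over prefixes $\alpha$ and, more importantly, a universal stability clause over all admissible continuations of $\alpha$. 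So its type is $\mathrm{Exec}(\mathcal{P}) \to \mathsf{Prop}$: a property of histories, quantifying over branching futures, not of points.

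The key step is to show these types are genuinely distinct. No function $\phi : \mathbb{T} \to \mathrm{Exec}(\mathcal{P})$ can make $\mathsf{Snap}(t,e) \Leftrightarrow \mathsf{Conv}(\mathcal{P},e)$ hold uniformly, unless one adds extra structure that is itself a protocol assumption. I would do this by exhibiting two executions $\alpha_1, \alpha_2$ of a simple two-component protocol under crash-recovery that agree on the global state at some clock time $t$, so $\mathsf{Snap}(t,e)$ has the same value in both, yet differ in $\mathsf{Conv}$.
\begin{itemize}
\item In $\alpha_1$, every component has durably committed epoch $e$, so no admissible failure can revert it.
\item In $\alpha_2$, one component holds epoch-$e$ state only in volatile memory. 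A subsequent admissible crash reverts it to epoch $e-1$.
\end{itemize}
The instantaneous states coincide, but the stability clause separates them. Hence $\mathsf{Conv}$ is not a function of instantaneous state, and so cannot equal any predicate of type $\mathbb{T}\to\mathsf{Prop}$.

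For the converse direction, I would note that $\mathsf{Conv}$ holding does not fix any time $t$ at which $\mathsf{Snap}(t,e)$ holds. In an asynchronous model, the component commits occur at incomparable events. A clock-time slice may therefore cut the execution where some $S_i$ is still pre-commit, even though the execution as a whole converges. This shows the identification fails in both directions, not merely in one.

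With these counterexamples in hand, the conclusion follows from Ryle's definition: using $\mathsf{Snap}$ as a stand-in for $\mathsf{Conv}$ presents a fact of one type (history property) in the idiom of another (time-point property). The main obstacle is the first step, making ``logical type'' rigorous enough that the proposition is a theorem and not a definitional remark. I would handle it by fixing the type signatures above explicitly and phrasing the claim as the non-existence of a state-only characterization of $\mathsf{Conv}$, which the counterexample settles.
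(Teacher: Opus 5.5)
\textbf{Where your key step fails.} Your signature-reading paragraph is essentially the paper's whole proof. That proof rests only on the observation that $\mathsf{Snap}$ has a free variable ranging over $\mathbb{T}$, while $\mathsf{Conv}$ quantifies over finite execution prefixes of $\mathcal{P}$. The step you call key, however, does not work with the paper's definitions. In the paper, being in $\textsf{committed}(e)$ \emph{means} that the local state for epoch~$e$ is durably persisted. This is stated in the persistence-process definition and in the failure model's ``committed (epoch~$e$ is durably stored).'' A component holding epoch-$e$ state only in volatile memory is therefore not in $\mathsf{Committed}(e)$. So $\mathsf{Snap}(t,e)$ is true in $\alpha_1$ and false in $\alpha_2$, and your pair separates nothing.

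Worse, the implication you are trying to refute is valid in the paper's model. A completed persistence is stable under crash-recovery; this is exactly what the proof of Theorem~\ref{thm:no-boundary} uses for $D_1$. Hence $\mathsf{Snap}(t,e)$ entails $\mathsf{Conv}(\mathcal{P},e)$ along the same execution, taking $\alpha$ to be the prefix up to~$t$. Your example goes through only if you reread $\mathsf{Committed}(e)$ as ``reflects epoch~$e$, possibly only in volatile memory.'' The paper's informal gloss and its apparently redundant stability clause make that reading tempting. But it contradicts the persistence-process definition, and you would have to adopt it explicitly. As written, ``the identification fails in both directions'' is not established.

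\textbf{What survives, and how to fix it.} What survives extensionally is one-directional. Your converse argument is correct: for any fixed $t$, Lemma~\ref{lem:straddling} yields a converging execution with a component straddling~$t$. So $\mathsf{Conv}(\mathcal{P},e)\not\Rightarrow\mathsf{Snap}(t,e)$, and no designated $t_c$ works.

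It is also true that $\mathsf{Conv}$ is not determined by the state at a fixed time, but the right witness is future divergence rather than volatility. Take two executions identical through~$t$, stable storage included, with some component mid-persistence at~$t$. Afterwards that component completes in one execution and crashes and never completes persistence for epoch~$e$ in the other. Then $\mathsf{Snap}(t,e)$ is false in both, while $\mathsf{Conv}$ holds in only one.

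Note finally that, in the paper's model and along a single execution, $\mathsf{Conv}(\mathcal{P},e)$ is equivalent to $\exists t.\,\mathsf{Snap}(t,e)$. So your conclusion that $\mathsf{Conv}$ ``cannot equal any predicate of type $\mathbb{T}\to\mathsf{Prop}$'' must be restricted to predicates that factor through the state at~$t$. The genuine separation is the unbound time variable, which is precisely the paper's type-level argument, not an extensional difference in content.
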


\begin{proof}
$\mathsf{Snap}(t,e)$ has a free variable $t$ ranging over clock
times; it is satisfiable at specific instants.
$\mathsf{Conv}(\mathcal{P},e)$ quantifies over execution prefixes of
a protocol; it is satisfiable by computation histories, not by time
points.  The domain of the existential quantifier differs: in the
first case it is $\mathbb{T}$ (times), in the second it is the set of
finite execution prefixes of $\mathcal{P}$.  Identifying these domains
treats a convergence property as a temporal predicate---precisely
Ryle's category mistake.
\end{proof}

\begin{remark}
This is not merely a philosophical observation.  The confusion has
engineering consequences.  When a system \emph{assumes}
$\mathsf{Snap}(t,e)$---that is, when it treats the completion of a
checkpoint as a temporal event at some time $t_c$---it relies on a
temporal boundary that does not exist under asynchronous distributed
execution with failures.  The subsequent sections make this precise.
\end{remark}

\section{System Model and Definitions}
\label{sec:model}

\subsection{Training state as distributed product type}

Let a training configuration consist of $n$ components drawn from
heterogeneous failure domains:
\[
  N = \{D_1, \ldots, D_n\}
  \quad\text{where}\quad
  D_i \in \{\text{GPU}, \text{Host}, \text{NIC},
             \text{NVMe}, \text{ObjStore}, \text{MetaSvc}\}.
\]
Each component $D_i$ maintains local state $S_i$ taking values in a
domain $\Sigma_i$.  The \emph{global training state} is the product
\[
  \mathcal{S} = \Sigma_1 \times \Sigma_2 \times \cdots \times \Sigma_n.
\]

For concreteness, the semantically relevant state of a training run
using a modern optimizer (e.g., AdamW) decomposes as:
\[
  S = (W,\; m,\; v,\; g,\; r,\; d)
\]
where $W$ denotes model weights, $m$ and $v$ the first and second
moment estimates, $g$ gradient buffers, $r$ random number generator
state, and $d$ the data iterator position.  Under distributed
parallelism (tensor, pipeline, data, expert), each component is
sharded across multiple members of $N$, so the global state
$\mathcal{S}$ is a product over all shards across all component types.

\subsection{Checkpoint as process composition}

We model each component's persistence mechanism as a process in the
sense of Hoare's Communicating Sequential
Processes~\cite{hoare1985csp}.

\begin{definition}[Persistence process]
For each component $D_i$, the persistence process $P_i$ is a
sequential process that, upon receiving a \textsf{checkpoint}$(e)$
signal, executes a sequence of local actions (buffer flush, DMA
transfer, write syscall, fsync, metadata update) and either:
\begin{enumerate}[label=(\alph*)]
\item terminates in a \textsf{committed}$(e)$ state, meaning the
  component's local state for epoch $e$ is durably persisted; or
\item fails (crashes, hangs, or returns an ambiguous acknowledgment).
\end{enumerate}
\end{definition}

\begin{definition}[Checkpoint protocol]
The checkpoint protocol $\mathcal{P}$ is the asynchronous parallel
composition
\[
  \mathcal{P} = P_1 \| P_2 \| \cdots \| P_n
\]
where $\|$ denotes interleaving (asynchronous) composition with no
shared clock.
\end{definition}

The critical observation is that $\mathcal{P}$ is a protocol with
\emph{duration}.  It is not an atomic action.  Each $P_i$ proceeds
through internal states at its own rate, subject to local failures,
I/O latencies, and scheduling delays.  The composition $\mathcal{P}$
has a set of admissible \emph{traces}---sequences of events across all
components---and the committed state for epoch $e$ is a property of
certain traces, not of any single point in any trace.

\subsection{Failure model}

We adopt the standard asynchronous crash-recovery model:

\begin{enumerate}[label=(\roman*)]
\item \textbf{Asynchrony.}  There is no global clock.  Message
  delivery times and local computation steps are unbounded but finite.
\item \textbf{Crash-recovery failures.}  Any component $D_i$ may
  crash at any point during the execution of $P_i$ and subsequently
  recover.  Upon recovery, $D_i$ reads its local stable storage to
  determine its state.
\item \textbf{Independent failure domains.}  The failure of $D_i$ is
  independent of the failure of $D_j$ for $i \neq j$.  (This is
  deliberately optimistic; correlated failures strengthen our
  results.)
\item \textbf{Epistemic ambiguity.}  After a crash-recovery event,
  the state of $D_i$ may be in one of three epistemic categories:
  \emph{committed} (epoch~$e$ is durably stored),
  \emph{uncommitted} (epoch~$e$ was not persisted; the component
  reflects epoch~$e{-}1$ or earlier), or
  \emph{ambiguous} (the component cannot determine which epoch it
  reflects without external coordination).
  This three-valued model follows the storage-layer analysis
  of~\cite{rebello_fsync,bornholt_crash}.
\end{enumerate}

\subsection{Trace property vs.\ state predicate}

\begin{proposition}[Committed state is a trace property]
\label{prop:trace}
The property ``all components have committed epoch~$e$'' is a
\emph{trace property} of $\mathcal{P}$: it is determined by the
history of events in an execution, not by the system state at any
single instant.
\end{proposition}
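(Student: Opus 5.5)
The plan is to argue by exhibiting two executions of $\mathcal{P}$ that share an instantaneous global state yet disagree on the property ``all components have committed epoch~$e$.'' That shows the property is not a function of the state at any single instant. It then remains to observe that the property is a function of the trace.

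First I will make the notion of ``state at an instant'' precise. Under the asynchrony assumption there is no global clock, so the only candidate for an instantaneous state is the tuple $(S_1,\ldots,S_n)\in\mathcal{S}$ at a cut of a trace. This tuple is exactly what $\mathsf{Snap}(t,e)$ speaks about.

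Next I will show that the property is determined by the trace. In a trace $\tau$ of $P_1\|\cdots\|P_n$, define $C_e(\tau)$ to hold iff for every $i$ the event $\textsf{committed}_i(e)$ occurs in $\tau$ and no later event of $\tau$ causes $D_i$ to revert. The reversion clause uses the stability requirement from the definition of $\mathsf{Conv}(\mathcal{P},e)$. This is a predicate on sequences of events, and so by definition it is a trace property.

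The main step, and the expected obstacle, is the indistinguishability construction. I will use the epistemic ambiguity clause of the failure model. Take an execution $\alpha$ in which $D_i$ writes its epoch-$e$ shard, fsyncs, and crashes before the metadata update. Take a second execution $\alpha'$ in which $D_i$ completes the metadata update but that update is not durable, and then crashes. After recovery, both runs leave $D_i$ in a local state in which it holds the epoch-$e$ bytes but is \emph{ambiguous}. All other components can be scheduled identically in the two runs, because composition is interleaving with independent failure domains. The resulting global tuples therefore coincide.

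Whether epoch~$e$ is committed, however, depends on which history occurred. In $\alpha$ the fsync followed by a crash before the metadata update leaves commitment undetermined. I will choose the runs so that in one of them a later admissible failure can revert $D_i$, which is exactly what decides stability. The delicate point is to choose the ambiguous case carefully, so that the two states are genuinely equal in $\Sigma_i$ rather than merely observationally equal, while commitment and stability still genuinely differ. If the strict version resists, I will fall back on arguing that stability quantifies over future admissible failures, since the definition of $\mathsf{Conv}(\mathcal{P},e)$ says ``no subsequent failure can cause reversion.'' That quantification over suffixes already cannot be evaluated from a single state without the protocol's history.

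Finally I will conclude that no function $f:\mathcal{S}\to\{0,1\}$ agrees with $C_e$ on all traces. This completes the proof, consistent with the type distinction of Proposition~\ref{prop:category}.
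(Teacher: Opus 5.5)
\paragraph{There is a genuine gap in the separating construction.} Your step defining $C_e$ as a predicate on event sequences is fine, but the separating construction, which carries your argument, fails. In $\alpha'$ the metadata update is not durable, so the crash erases it. On recovery $D_i$ reads exactly the same stable storage as in $\alpha$: fsynced epoch-$e$ bytes and no metadata record. The two runs then agree on the global tuple and on every admissible continuation. Since $D_i$ is ambiguous in both, ``all components have committed epoch~$e$'' has the same truth value in both runs. The only difference between the histories is a volatile write with no lasting effect, so there is nothing to separate.

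Your fallback does not help. In a process model the set of admissible futures is fixed by the current configuration (volatile and stable storage, channel contents, control points). So ``no subsequent admissible failure can cause reversion'' is a reachability condition on the current state, evaluable from the protocol plus that state without any reference to history. Note also a mismatch between your two steps: your $C_e$ uses ``no later event of $\tau$,'' a clause about the given trace, whereas the stability clause of $\mathsf{Conv}(\mathcal{P},e)$ quantifies over possible continuations.

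More fundamentally, the paper writes local commitment as a state predicate. Both $\mathsf{Snap}(t,e)$ and $\mathsf{Conv}(\mathcal{P},e)$ are phrased as $S_i \in \mathsf{Committed}(e)$, and a recovering component determines its state from stable storage. So at a cut whose local states include stable storage, the conjunction over $i$ \emph{is} a function $f:\mathcal{S}\to\{0,1\}$, and your concluding claim that no such $f$ exists is false under the paper's definitions.

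\paragraph{What the paper actually proves, and how to repair your argument.} The paper's contrast is with \emph{clock time}, not with cuts. Its proof makes two observations. First, $D_i$ enters $\textsf{committed}(e)$ only after a specific sequence of local events (write, flush, ack), so its status is fixed by how far $P_i$'s trace has progressed. Second, under asynchronous interleaving no single time coordinate determines the set of completed events across all components. Hence the conjunction lives on execution histories, not on clock-indexed state vectors $S(t)$. By recasting ``state at an instant'' as ``tuple at a cut,'' you removed exactly the distinction the proposition is about.

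To repair the proof, keep your $C_e$ step and replace the indistinguishability construction with the observation that no clock time $t$ canonically selects a cut. For any $t$, admissible schedules (as in Lemma~\ref{lem:straddling}) can place different components' commit events on different sides of $t$. So whether the property holds ``at $t$'' is determined by the execution, not by $t$.
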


\begin{proof}
A component $D_i$ reaches $\textsf{committed}(e)$ only after a
specific sequence of local events (write, flush, ack).  Whether $D_i$
is in $\textsf{committed}(e)$ at any instant depends on which events
in $P_i$'s trace have occurred.  Since $\mathcal{P}$ is an
asynchronous composition, there is no instant at which the set of
completed events across all components is globally determined by a
single time coordinate.  The property is defined over the set of
completed traces, not over instantaneous state vectors.
\end{proof}

This is the formal content of the FITO category mistake applied to
checkpointing: treating a trace property as a state predicate.

\section{Non-Existence of a Temporal Snapshot Boundary}
\label{sec:temporal}

We now prove the main impossibility result.  We first establish a
structural property of asynchronous checkpoint protocols.

\begin{lemma}[Existence of straddling schedules]
\label{lem:straddling}
Let $\mathcal{P} = P_1 \| \cdots \| P_n$ be an asynchronous
checkpoint protocol over $n \geq 2$ components.  For any proposed
boundary time $t_c$ and any component $D_j$, there exists an
admissible execution of $\mathcal{P}$ in which $D_j$'s persistence
sequence straddles $t_c$: that is, $D_j$ has begun but not completed
its persistence for epoch~$e$ at time~$t_c$.
\end{lemma}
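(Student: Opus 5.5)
The approach is a direct schedule construction, exploiting the asynchrony clause of the failure model: local step durations are unbounded but finite, and there is no global clock that $\mathcal{P}$ could consult. Fix a boundary time $t_c$ and a component $D_j$. Since $t_c$ is a time in the external (analyst's) frame, we place the persistence actions of $D_j$ on that time line by a timing function that assigns a real time to every event of an execution.

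The first step is to observe that $P_j$'s persistence sequence has at least two distinct local events: a first action $a_j^{\mathrm{start}}$ (receipt of \textsf{checkpoint}$(e)$ or the first buffer flush) and a terminal action $a_j^{\mathrm{end}}$ (the final fsync or metadata update that places $D_j$ in \textsf{committed}$(e)$). This follows from the definition of the persistence process, which is a nonempty sequence of actions ending in a commit. If a degenerate $P_j$ had a single action, we would still separate the \emph{receipt} of \textsf{checkpoint}$(e)$ from the commit.

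Next, take any admissible execution $\alpha$ of $\mathcal{P}$ in which $D_j$ eventually commits. Define a timing for $\alpha$ that assigns $a_j^{\mathrm{start}}$ the time $t_c - \delta$ and $a_j^{\mathrm{end}}$ the time $t_c + \delta$ for some $\delta > 0$. Then rescale the remaining events monotonically so that the causal order $\prec$ of $\alpha$ is preserved, both within each $P_i$ and across any message edges.

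Admissibility follows from three facts. The interleaving semantics of $\|$ constrains only the order of events, not their times. The asynchrony assumption allows any finite delay between consecutive local steps. And the independence of failure domains means that stretching $D_j$'s steps forces nothing on the other $P_i$. The resulting timed execution is therefore admissible, and at time $t_c$ the component $D_j$ has begun but not completed its persistence for epoch~$e$.

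The main obstacle is making precise that retiming preserves admissibility. In particular, we must rule out any dependence of $\mathcal{P}$ on $t_c$ itself: if the protocol could read a clock and refuse to start near $t_c$, the construction would fail. I would handle this with the no-global-clock clause. Local clocks, if present, are just further local state with unbounded drift, so $D_j$ cannot distinguish the retimed execution from the original. Formally, the two executions are indistinguishable to every process, since they have the same local event sequences.

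The hypothesis $n \ge 2$ is not needed for the straddling of a single $D_j$. It becomes relevant only in the subsequent theorem, where the other components are scheduled to finish before $t_c$, or to crash, while $D_j$ straddles.
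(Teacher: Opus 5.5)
Your proposal is correct and follows the same route as the paper: asynchrony puts no bound on relative speeds, so the scheduler can place the start of $D_j$'s persistence at $t_c-\delta$ and its completion after $t_c$. Your explicit monotone retiming, the indistinguishability argument that rules out the protocol reacting to $t_c$, and your observation that $n\ge 2$ is only needed in Theorem~\ref{thm:no-boundary} make rigorous what the paper asserts in a few lines.
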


\begin{proof}
Under the asynchronous model, there is no bound on the relative
speeds of components.  The adversary (scheduler) can slow $D_j$'s
local computation arbitrarily while allowing other components to
proceed.  In particular, the adversary can schedule $D_j$'s
persistence sequence to begin at time $t_c - \delta$ for any
$\delta > 0$, and to complete at time $t_c + \delta'$ for any
$\delta' > 0$.  This places $D_j$ mid-persistence at~$t_c$.
Since the asynchronous model admits all such schedules, the
execution is admissible.
\end{proof}

\begin{theorem}[Non-existence of temporal snapshot boundary]
\label{thm:no-boundary}
Let $\mathcal{P} = P_1 \| \cdots \| P_n$ be an asynchronous
checkpoint protocol over $n \geq 2$ components with independent
crash-recovery failures.  There exists no protocol $\mathcal{P}$ and
no function $t_c : \mathit{Epochs} \to \mathbb{T}$ such that, for
every admissible failure schedule, the following holds: at time $t_c(e)$,
either all components are in $\textsf{committed}(e)$ or all are in a
state reflecting only epochs prior to~$e$.
\end{theorem}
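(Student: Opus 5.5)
The proof will be a direct adversary argument. Fix an arbitrary protocol $\mathcal{P}$ and an arbitrary boundary function $t_c$, and fix an epoch $e$. The aim is to exhibit one admissible failure schedule under which, at time $t_c(e)$, the components are neither uniformly in $\textsf{committed}(e)$ nor uniformly in states reflecting only epochs before $e$. The key tool is Lemma~\ref{lem:straddling}, strengthened slightly by using $n\ge 2$ so that two components can be scheduled independently.

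First, choose two distinct components $D_1, D_2$; this is where $n \geq 2$ is used. Because the composition $P_1 \| \cdots \| P_n$ has no shared clock and relative speeds are unbounded, the scheduler can drive $D_1$ to complete its persistence sequence, reaching $\textsf{committed}(e)$, strictly before $t_c(e)$. It can also delay $D_2$ so that it has not yet started its persistence for $e$ by $t_c(e)$, so $D_2$ still reflects epoch $e-1$ or earlier. This needs only the freedom invoked in the proof of Lemma~\ref{lem:straddling}, applied to two components in opposite directions. At $t_c(e)$ the state vector is then mixed, and the required dichotomy fails.

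The main obstacle is that $\mathcal{P}$ is arbitrary. It might try to force uniformity by coordination, for example by making $D_1$ wait for a message from $D_2$ before committing. I would handle this in two cases.

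\textbf{Case 1: some component can commit $e$ without waiting on the others.} Then the schedule above is admissible directly.

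\textbf{Case 2: the protocol uses coordination.} Then I would use Lemma~\ref{lem:straddling} on the \emph{last} component $D_j$ whose local commit completes the epoch. By that lemma, $D_j$ can be placed mid-persistence at $t_c(e)$. I would then inject a crash of $D_j$ at $t_c(e)$. Under the epistemic-ambiguity clause of the failure model, $D_j$'s state is then uncommitted or ambiguous, so it is not in $\textsf{committed}(e)$.

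Two subcases follow. If any other component has already committed $e$, the state is mixed, and we are done. If no component has committed, then no component commits before any other has started, and the scheduler instead delays every component except one until after $t_c(e)$. Some component must be the first to commit in any completing execution, and that first commit event can be placed before $t_c(e)$. This again yields a mixed state.

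A second delicate point is that $t_c$ may be chosen after seeing the protocol but not the schedule. This is why the quantifier order in the theorem (no $\mathcal{P}$, no $t_c$, for \emph{every} schedule) lets the adversary move last. I would state this explicitly.

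Finally, I would note that independence of failure domains guarantees the crash of a single $D_j$ is admissible without correlated crashes elsewhere. This completes the contradiction for every pair $(\mathcal{P}, t_c)$.
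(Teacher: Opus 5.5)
Your proof is correct in the paper's model, but it takes a different route. The paper uses Lemma~\ref{lem:straddling} to put $D_2$ mid-persistence at $t^*=t_c(e)$. It then asserts that $D_1$ has already committed in the same execution, crashes $D_2$ at $t^*$, and uses the epistemic-ambiguity clause so that $D_2$ recovers to $e{-}1$ or $\bot$. Your main schedule drives $D_1$ to $\textsf{committed}(e)$ before $t_c(e)$ while $D_2$ has not yet started, and it uses no failure at all. This buys you two things:
\begin{itemize}
\item It shows that asynchrony alone defeats any fixed boundary, so the crash-recovery hypothesis plays no logical role in this statement.
\item It builds the two-component schedule explicitly. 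The paper instead extrapolates it from a lemma that only speaks about one component, a step it leaves implicit.
\end{itemize}
What the paper's crash buys is durability. Its mixed state is exactly what recovery will read, matching the $\{e, e{-}1, \bot\}$ states of the epoch lattice. Your mixed state is an in-flight transient that would resolve if nothing failed. Since the paper's $\|$ is pure interleaving of independent local persistence sequences, your Case~1 always applies, and that already completes the proof.

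Your Case~2 is not needed for the paper's model. If you want it to cover coordinating protocols, its second subcase is mis-justified. Delaying every component except one until after $t_c(e)$ is exactly what coordination can block, because the undelayed component may be waiting on messages from the delayed ones. Also, ``no component commits before any other has started'' does not follow from the subcase hypothesis.

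The step that does work, which you state but do not justify, is the following. Take any execution that commits $e$ and let $c$ be its earliest commit event. Schedule the causal past of $c$, together with $c$, before $t_c(e)$, and all other events after it. That set is downward closed under happens-before. So, given the adversary's freedom over real-time placement (the same freedom Lemma~\ref{lem:straddling} uses), the retimed execution is admissible. Because $c$ is earliest, the set contains no other component's commit event. Hence at $t_c(e)$ exactly one component is in $\textsf{committed}(e)$, and since $n \geq 2$, at least one is not.

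This single argument covers coordinated and uncoordinated protocols alike, and it needs neither your case split nor the straddle-plus-crash detour. Its one hidden premise, which the paper also relies on, is that some execution commits $e$ at all. A protocol that never commits satisfies the dichotomy vacuously; the paper's definition of a persistence process rules this out.
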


\begin{proof}
Suppose, for contradiction, that such a protocol $\mathcal{P}$ and
boundary function $t_c$ exist.  Fix an epoch~$e$ and let $t^* =
t_c(e)$.

By Lemma~\ref{lem:straddling}, there exists an admissible execution
$\alpha$ in which component $D_1$ has completed persistence (reaching
$\textsf{committed}(e)$) before $t^*$, while component $D_2$'s
persistence sequence straddles $t^*$: $D_2$ has begun writing but has
not yet reached durable commitment at $t^*$.

Now apply a crash-recovery failure to $D_2$ at time $t^*$.  Since
$D_2$ has begun but not completed its persistence sequence:
\begin{itemize}[nosep]
\item $D_2$'s write may be partially completed, torn, or not yet
  flushed to stable storage.
\item Upon recovery, $D_2$ is in state $e{-}1$ or $\bot$ (by the
  epistemic ambiguity clause of the failure model).
\end{itemize}
Meanwhile, $D_1$ remains in $\textsf{committed}(e)$ (its persistence
completed before $t^*$ and is stable under crash-recovery).

At $t^*$, the system is therefore in a mixed state: $D_1$ reflects
epoch~$e$, $D_2$ does not.  This contradicts the assumed atomicity
of $t_c(e)$.

The adversary's strategy generalizes to any $n \geq 2$: given any
proposed $t^*$, the adversary applies Lemma~\ref{lem:straddling} to
arrange a straddling component and crashes it at~$t^*$.  Since the
failure is independent and the straddling execution is admissible,
this yields a mixed state for every proposed~$t^*$.
\end{proof}

\begin{remark}[Why Chandy-Lamport does not rescue the situation]
The Chandy-Lamport snapshot algorithm~\cite{chandy_lamport} provides
consistent global snapshots under the assumption of FIFO channels and
no failures.  GPU training clusters using RDMA (RoCEv2, InfiniBand)
and collective communication libraries (NCCL, GLOO) do not satisfy the
FIFO channel assumption: messages may be reordered across different
queue pairs, and the multi-tier storage stack (GPU $\to$ host $\to$
NIC $\to$ storage) introduces non-FIFO ordering.  Moreover,
Chandy-Lamport does not tolerate component failures.  Both assumptions
are violated in the setting of Theorem~\ref{thm:no-boundary}.
\end{remark}

\begin{remark}[Connection to Lamport's happens-before]
The committed state for epoch~$e$ requires a causal chain:
\[
  \textsf{persist}_1 \to \textsf{ack}_1 \to
  \cdots \to
  \textsf{persist}_n \to \textsf{ack}_n \to
  \textsf{commit}
\]
This chain has non-zero duration under asynchrony~\cite{lamport1978}.
The FITO category mistake collapses this chain to a point---treating
the entire causal history as though it occurred at a single instant
$t_c$.  This is the temporal version of the type error identified in
Proposition~\ref{prop:category}.
\end{remark}

\section{Checkpoint Inconsistency and the Epoch Lattice}
\label{sec:lattice}

We now reformulate checkpoint inconsistency in order-theoretic terms.

\subsection{The epoch lattice}

\begin{definition}[Epoch state]
After a checkpoint attempt for epoch~$e$, each component $D_i$ is in
one of three states:
\[
  \sigma_i \in \{e,\; e{-}1,\; \bot\}
\]
where $e$ denotes committed to epoch~$e$, $e{-}1$ denotes reverted to
the prior epoch, and $\bot$ denotes epistemically ambiguous (the
component cannot determine its own epoch without external
coordination).
\end{definition}

\begin{definition}[Epoch lattice]
The epoch lattice $\mathcal{L}_n$ is the product lattice
\[
  \mathcal{L}_n = \{e, e{-}1, \bot\}^n
\]
ordered componentwise, where $e{-}1 < \bot < e$.  The top element is
$\top = (e, e, \ldots, e)$ (all committed) and the bottom element is
$\bot_{\mathcal{L}} = (e{-}1, e{-}1, \ldots, e{-}1)$ (all reverted).
\end{definition}

\begin{definition}[Atomic states]
The \emph{atomic states} of $\mathcal{L}_n$ are exactly the top and
bottom elements: $\{\top, \bot_{\mathcal{L}}\}$.  All other lattice
points are \emph{mixed states}.
\end{definition}

\subsection{Atomicity as a measure-zero event}

We assign a probability measure to $\mathcal{L}_n$ by the independent
binary model.

\begin{definition}[Independent binary model]
Assume each component independently commits epoch~$e$ with probability
$q \in (0,1)$ and reverts to $e{-}1$ with probability $1-q$.  (We
treat the three-valued model separately below.)
\end{definition}

\begin{lemma}[Mixed-state probability]
\label{lem:mixed}
In the independent binary model,
\[
  \Pr[\text{mixed state}]
  = 1 - q^n - (1-q)^n.
\]
\end{lemma}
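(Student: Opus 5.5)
The plan is to compute the probability of the complementary event, namely that the outcome vector is atomic. In the independent binary model each component $D_i$ takes a value $\sigma_i \in \{e, e{-}1\}$, with $\Pr[\sigma_i = e] = q$ and $\Pr[\sigma_i = e{-}1] = 1-q$, independently across $i$. The ambiguous value $\bot$ has probability zero here, so the outcome is almost surely a point of the sublattice $\{e, e{-}1\}^n \subseteq \mathcal{L}_n$. By the definition of atomic states, the outcome is non-mixed exactly when it equals $\top$ or $\bot_{\mathcal{L}}$.

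The steps are as follows. First, I use the independence clause of the failure model (independent failure domains) to factor the joint probability of any specific vector as a product of its coordinate probabilities. This gives $\Pr[\top] = \prod_{i=1}^n \Pr[\sigma_i = e] = q^n$ and $\Pr[\bot_{\mathcal{L}}] = (1-q)^n$. Second, I observe that $\top \neq \bot_{\mathcal{L}}$ as vectors whenever $n \geq 1$, because their first coordinates differ. The two atomic events are therefore disjoint, and additivity gives $\Pr[\text{atomic}] = q^n + (1-q)^n$. Third, since the mixed states are by definition the complement of the atomic states within the support, I conclude
\[
  \Pr[\text{mixed state}] = 1 - q^n - (1-q)^n .
\]

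There is no substantive obstacle. The only point needing care is bookkeeping: I must confirm that the binary model puts zero mass on vectors containing $\bot$, so that ``mixed'' and ``non-atomic'' coincide on the support, and that the two atomic outcomes are distinct so that nothing is double-counted. As a sanity check before moving on to the measure-zero claim, I will verify the formula in two cases. For $n=1$ it gives $1 - q - (1-q) = 0$, which is correct because a single component is always atomic. For $n \geq 2$ with $q \in (0,1)$ it gives a strictly positive value. That positivity is the input for the subsequent observation that $q^n + (1-q)^n \leq \max(q,1-q)^{n-1} \to 0$, so atomicity vanishes exponentially in $n$.
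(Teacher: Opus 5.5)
Your proposal is correct and follows the same route as the paper. Both compute the probabilities $q^n$ and $(1-q)^n$ of the two atomic outcomes $\top$ and $\bot_{\mathcal{L}}$ and then take the complement; your extra bookkeeping (disjointness of the two atomic vectors, zero mass on $\bot$, the $n=1$ check) only makes explicit what the paper's one-line proof leaves implicit.
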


\begin{proof}
The only atomic outcomes are all-committed (probability~$q^n$) and
all-reverted (probability~$(1-q)^n$).  Their complement is the mixed
region.
\end{proof}

\begin{theorem}[Atomicity is measure-zero]
\label{thm:measure-zero}
For any fixed $q \in (0,1)$,
\[
  \lim_{n \to \infty} \Pr[\text{atomic state}] = 0.
\]
The convergence is exponential: $\Pr[\text{atomic}] \leq q^n +
(1-q)^n$, and both terms decay exponentially.
\end{theorem}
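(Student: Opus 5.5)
The plan is to obtain the limit directly from Lemma~\ref{lem:mixed}. By that lemma the atomic region is the complement of the mixed region, so in the independent binary model
\[
  \Pr[\text{atomic state}] = q^n + (1-q)^n ,
\]
and the stated inequality holds with equality. I will record this identity first. In the three-valued model the $\bot$ outcomes only remove mass from the two atomic points, so there the identity becomes an upper bound; the theorem states it as an inequality for that reason.

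Next I will bound both terms by a single geometric sequence. Set $\rho = \max(q, 1-q)$. Since $q \in (0,1)$ is fixed, $\rho \in [1/2, 1)$, and it is strictly less than $1$ precisely because $q$ is bounded away from both $0$ and $1$. Hence
\[
  q^n + (1-q)^n \le 2\rho^n = 2\exp\bigl(-n\log(1/\rho)\bigr),
\]
which tends to $0$ exponentially fast with rate $\log(1/\rho) > 0$. Each term also decays on its own, at rates $\log(1/q)$ and $\log(1/(1-q))$. This gives both the limit and the claimed exponential convergence. As a corollary, $\Pr[\text{mixed}] \to 1$.

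The step needing most care is interpretive rather than computational. \emph{Measure-zero} here means only that the probability tends to zero as $n\to\infty$; for each finite $n$ the atomic set has strictly positive probability. I will note explicitly that the limit is taken with $q$ fixed independently of $n$. If instead $q = q_n \to 1$, for example $q_n = 1 - c/n$, then $q_n^n \to e^{-c} \neq 0$ and the conclusion fails. The uniform bound $\rho < 1$ is exactly what excludes this case, so the proof must keep $q$ fixed.
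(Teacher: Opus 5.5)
Your proposal is correct and follows the paper's own argument. The paper likewise notes that $q^n \to 0$ and $(1-q)^n \to 0$ and attributes the exponential rate to $\max(q,1-q) < 1$, which is your bound $q^n + (1-q)^n \le 2\rho^n$; your caveat that $q$ must stay fixed is a sound clarification the paper leaves implicit, since $q_n = 1 - c/n$ gives $q_n^n \to e^{-c}$.
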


\begin{proof}
$q^n \to 0$ and $(1-q)^n \to 0$ as $n \to \infty$ for $q \in (0,1)$.
The sum of two sequences converging to zero converges to zero.
Exponential decay follows from $\max(q, 1-q) < 1$.
\end{proof}

\begin{remark}[Interpretation on the lattice]
The atomic states $\{\top, \bot_{\mathcal{L}}\}$ are two points in a
lattice of $2^n$ elements (in the binary model) or $3^n$ elements (in
the ternary model).  As $n$ grows, the atomic states constitute a
vanishing fraction of the lattice.  The system state after a
checkpoint attempt under failures lands in the lattice interior with
probability approaching~1.  Atomicity is not the typical case; it is
the exceptional one.
\end{remark}

\begin{remark}[Failure correlations strengthen the result]
The independent binary model is deliberately optimistic.  In practice,
storage network congestion, power domain failures, and cascading
straggler effects cause positively correlated component failures.
Under positive correlation, the joint probability of all-committed is
$\Pr[\text{all committed}] \leq q^n$ (with equality only under
independence), and the mixed-state probability is correspondingly
higher.  Our bounds therefore \emph{understate} the risk of non-atomic
checkpoints in production systems.
\end{remark}

\subsection{The ternary model with epistemic ambiguity}

\begin{theorem}[Strengthened bound with ambiguity]
\label{thm:ternary}
If each component is independently in state $e$ with probability~$q$,
state $e{-}1$ with probability $p$, and state $\bot$ with probability
$r = 1 - q - p > 0$, then
\[
  \Pr[\text{atomic state}] \leq q^n + p^n
\]
and any occurrence of $\bot$ in the state vector operationally
violates atomicity (the system cannot soundly resume on unknowable
state), so
\[
  \Pr[\text{operationally atomic}] \leq q^n.
\]
\end{theorem}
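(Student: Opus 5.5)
The plan is to argue exactly as in Lemma~\ref{lem:mixed}, now on the ternary lattice $\mathcal{L}_n = \{e, e{-}1, \bot\}^n$. First I fix what ``atomic'' means here. By the definition of atomic states, these are exactly $\top = (e,\ldots,e)$ and $\bot_{\mathcal{L}} = (e{-}1,\ldots,e{-}1)$. The event ``atomic state'' is therefore the disjoint union of two cylinder events: $\{\sigma_i = e \text{ for all } i\}$ and $\{\sigma_i = e{-}1 \text{ for all } i\}$.

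By independence of the components, the first event has probability $\prod_i \Pr[\sigma_i = e] = q^n$ and the second has probability $p^n$. So in fact $\Pr[\text{atomic state}] = q^n + p^n$ with equality, which in particular gives the stated inequality. I would note in passing that the inequality is the form that survives if the model is weakened. If one only assumes the per-component marginals and allows positive correlation (as in the remark on correlations), then each all-equal event is no longer controlled by independence and the equality can fail. In that case I would simply state the result under independence, where equality holds.

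For the second claim, I formalize ``operationally atomic'' as the event that the state vector is atomic \emph{and} contains no $\bot$ coordinate, with the resumable configuration being all-committed. Here one definitional choice is needed. If the all-reverted vector $\bot_{\mathcal{L}}$ also counts as operationally atomic (resume from epoch $e{-}1$), the bound would be $q^n + p^n$, not $q^n$. To match the statement I will interpret ``operationally atomic with respect to epoch~$e$'' as the system soundly resuming \emph{at epoch~$e$}. That requires every component to certify $\sigma_i = e$, giving the event $\{\sigma_i = e \ \forall i\}$ with probability $q^n$. Alternatively, I will remark that $\{\text{no } \bot\}$ has probability $(q+p)^n = (1-r)^n$, which is also exponentially small since $r>0$. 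The combined operational bound $\min\{q^n + p^n, (1-r)^n\}$ is then the honest general statement, and $q^n$ holds under the epoch-$e$ reading.

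The main obstacle is this definitional step rather than any computation. The phrase ``operationally violates atomicity'' must be turned into a precise event so that the bound $q^n$ actually follows. I would first try the epoch-$e$-resumption reading above. If the paper intends $\bot_{\mathcal{L}}$ to count as acceptable, I would fall back on stating the bound as $q^n + p^n \le (1-r)^n$. The remaining steps, a monotonicity comparison and exponential decay because $q,p<1$, are routine and mirror Theorem~\ref{thm:measure-zero}.
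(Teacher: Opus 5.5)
Your proposal is correct and follows the paper's proof. The atomic event is the disjoint union of the all-$e$ and all-$(e{-}1)$ vectors, with probabilities $q^n$ and $p^n$, and the paper likewise defines operational atomicity as resuming at epoch~$e$ with no ambiguity, which is your epoch-$e$ reading and gives $q^n$.

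One caution on your passing remark: positive correlation makes the all-equal events \emph{more} likely, so the upper bound $q^n+p^n$ does not survive that weakening. For example, perfectly correlated components give $\Pr[\text{atomic}] = q+p > q^n+p^n$ for $n\ge 2$ whenever $q+p>0$. Your argument therefore rests on the independence hypothesis the theorem assumes.
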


\begin{proof}
Operational atomicity requires all components in state~$e$ (the system
resumes from epoch~$e$) with no ambiguity.  This event has probability
$q^n$.  The ``all reverted'' event has probability $p^n$.  Both decay
exponentially.
\end{proof}

\subsection{Concrete parameters}

For a 70-billion-parameter model trained with 8-way tensor
parallelism, 4-way pipeline parallelism, and 64-way data parallelism,
the number of independently persisted atomic units includes: weight
shards ($8 \times 4 \times 64 = 2048$), optimizer moment shards
($2 \times 2048 = 4096$ for $m$ and $v$), gradient buffers,
per-rank RNG states, data iterator positions, and metadata records.
A conservative estimate is $n \geq 4000$.

\begin{center}
\begin{tabular}{@{}rcc@{}}
\toprule
$q$ (per-unit reliability) & $n$ & $\Pr[\text{atomic}]$ \\
\midrule
$0.999$ & $1000$ & $\leq 0.368$ \\
$0.999$ & $4000$ & $\leq 0.018$ \\
$0.9999$ & $4000$ & $\leq 0.670$ \\
$0.9999$ & $10000$ & $\leq 0.368$ \\
$0.99999$ & $10000$ & $\leq 0.905$ \\
\bottomrule
\end{tabular}
\end{center}

Even with per-unit reliability of four nines ($q = 0.9999$), a system
with 4000 persistence units achieves atomicity with probability less
than 0.67.  At scale, non-atomic checkpoints are not rare corner
cases; they are the expected outcome.

\section{Semantic Causality as Type Violation}
\label{sec:type}

We now formalize the consequences of non-atomic recovery for
optimization correctness.

\subsection{The optimization type system}

Consider a training loop using the AdamW optimizer.  At each epoch
$e$, the optimizer state is a tuple of the form:
\[
  \mathcal{T}_e = (W_e,\; m_e,\; v_e,\; g_e,\; r_e,\; d_e)
\]
where $W_e$ are weights, $m_e$ and $v_e$ the first and second moment
estimates, $g_e$ accumulated gradients, $r_e$ the RNG state, and $d_e$
the data iterator position.

\begin{definition}[Epoch-typed state]
We write $\mathcal{T}_e : \mathsf{State}(e)$ to denote that the tuple
$\mathcal{T}_e$ has type $\mathsf{State}(e)$---all components reflect
the same logical training epoch~$e$.  The type is parameterized by the
epoch.
\end{definition}

\begin{definition}[Valid optimizer step]
The optimizer step function
\[
  \mathsf{step} : \mathsf{State}(e) \to \mathsf{State}(e{+}1)
\]
maps an epoch-consistent state to the next epoch-consistent state.
For AdamW, this is:
\begin{align*}
  m_{e+1} &= \beta_1 m_e + (1 - \beta_1) g_e \\
  v_{e+1} &= \beta_2 v_e + (1 - \beta_2) g_e^2 \\
  W_{e+1} &= W_e - \eta \left(
    \frac{\hat{m}_{e+1}}{\sqrt{\hat{v}_{e+1}} + \epsilon}
    + \lambda W_e \right)
\end{align*}
where $\hat{m}$ and $\hat{v}$ are bias-corrected.  The function is
well-defined only when all components of the input share the same
epoch type.
\end{definition}

\subsection{Mixed-epoch recovery as type coercion}

\begin{definition}[Mixed-epoch state]
A mixed-epoch state is a tuple
\[
  \mathcal{T}_{\mathrm{mixed}} = (W_e,\; m_{e-1},\; v_e,\;
  g_{e-1},\; r_e,\; d_{e-2})
\]
in which different components reflect different epochs.  We write
$\mathcal{T}_{\mathrm{mixed}} : \mathsf{State}(e, e{-}1, e,
e{-}1, e, e{-}2)$ to indicate the heterogeneous epoch assignment.
\end{definition}

\begin{theorem}[Type violation under mixed-epoch recovery]
\label{thm:type}
Let $\mathcal{T}_{\mathrm{mixed}} : \mathsf{State}(\vec{e})$ be a
mixed-epoch state with $\vec{e} \neq (e, e, \ldots, e)$ for any
single epoch~$e$.  Then $\mathsf{step}(\mathcal{T}_{\mathrm{mixed}})$
is not a valid optimizer step: the output does not belong to
$\mathsf{State}(e{+}1)$ for any~$e$.
\end{theorem}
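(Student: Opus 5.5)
The argument works at the level of the typing discipline fixed by the definition of $\mathsf{step}$, read as a dependently typed function $\mathsf{step} : \prod_{e} \mathsf{State}(e) \to \mathsf{State}(e{+}1)$. I would prove the two halves of the statement in order. First, $\mathsf{step}$ cannot be applied to $\mathcal{T}_{\mathrm{mixed}}$ as a well-typed application. Second, if the AdamW update equations are applied componentwise anyway (the untyped, ``coerced'' reading that a recovery script actually executes), the resulting tuple still fails to inhabit $\mathsf{State}(e')$ for every $e'$. The first half is the type-theoretic core. The second half prevents a counterargument of the form ``the arithmetic still produces numbers.''

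For the first half, suppose $\mathsf{step}(\mathcal{T}_{\mathrm{mixed}})$ were a valid step. Then there would be some $e_0$ with $\mathcal{T}_{\mathrm{mixed}} : \mathsf{State}(e_0)$. By the definition of epoch-typed state, every component would reflect epoch $e_0$, so $\vec{e} = (e_0, \ldots, e_0)$. This contradicts the hypothesis $\vec{e} \neq (e,\ldots,e)$. I will state explicitly the assumption that the epoch label of each component is uniquely determined, i.e.\ a component reflecting $e_i$ does not also reflect $e_j \neq e_i$. This is what makes $\mathsf{State}(\vec{e})$ and $\mathsf{State}(e_0)$ disjoint, and it follows from reading the epoch as the logical training step at which that component was last written.

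For the second half, let $i,j$ be indices with $e_i \neq e_j$, and compute the output components from the AdamW equations. Each output component is a function of specific input components. For example, $m_{\mathrm{out}} = \beta_1 m_{e_m} + (1-\beta_1) g_{e_g}$, and $W_{\mathrm{out}}$ depends on $W_{e_W}$, $m_{\mathrm{out}}$ and $v_{\mathrm{out}}$. I will assign to each output the epoch label $1 + \max$ of the labels of its inputs and argue that it does not lie in $\mathsf{State}(e'{+}1)$ for any $e'$. If two inputs of the same output component carry different labels, that output mixes histories: for $m$ it is the EMA of a gradient sequence with one term missing or duplicated relative to every single-epoch trajectory, so it equals no $m_{e'+1}$. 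If instead the mismatch is between inputs feeding different output components (e.g.\ $d_{e-2}$ against the rest), the output tuple inherits the label disagreement, and the first half applies again.

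The main obstacle is this second half. ``Does not equal $m_{e'+1}$'' is a semantic claim about trajectories and could fail by numerical coincidence, for instance when $g_{e-1} = g_e$. I would therefore fix the notion of membership in $\mathsf{State}(e')$ as provenance rather than value: a tuple inhabits $\mathsf{State}(e')$ iff it is the image of the genuine trajectory at step $e'$. Under that convention, label propagation makes the argument a straightforward induction on the structure of the update equations. I would add a remark that value-level coincidences are non-generic and do not restore typing.
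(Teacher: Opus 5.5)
Your argument is correct in substance, but it is organized differently from the paper's. The paper defines the reachable set $\mathsf{Traj}(e_0,e)$ of states obtained from any epoch-consistent initialization. It observes that $\mathsf{step}$ maps $\mathsf{State}(e')$ to $\mathsf{State}(e'{+}1)$, so every reachable state is epoch-uniform, and concludes that $\mathcal{T}_{\mathrm{mixed}}$ lies outside every $\mathsf{Traj}(e_0,e)$. It then simply asserts that $\mathsf{step}(\mathcal{T}_{\mathrm{mixed}})$ lies outside $\bigcup_{e'}\mathsf{Traj}(e_0,e')$. The semantic weight is carried by an illustration: the bias-correction denominator is calibrated for $e{+}1$ steps, while the numerator reflects a trajectory that skipped one.

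You instead separate two readings. In the typed reading the application is ill-typed, which follows from uniqueness of epoch labels. In the coerced reading you propagate labels through the AdamW equations, and you make the provenance semantics explicit. This makes visible two things the paper leaves tacit. First, its step ``$\mathcal{T}_{\mathrm{mixed}}\notin\mathsf{Traj}$'' is itself a label argument: with value-based membership, $\mathsf{Traj}(e_0,e_0)$ already contains the mixed tuple, since any tuple can serve as a consistent initialization. Second, its final sentence needs exactly the propagation argument you set out to supply. The paper's route, in exchange, explains why the coerced update is numerically harmful rather than merely mislabeled.

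Your second half needs one repair: the rule ``label $=1+\max$ of input labels'' is the wrong invariant. Take the paper's own example $(W_e, m_{e-1}, v_e, g_e, r_e, d_e)$, or a stale $g$ alone. Here $m_{\mathrm{out}}$ receives $1+\max(e{-}1,e)=e{+}1$, exactly the label a genuine update gives, so the max rule cannot see the mismatch. The rule also errs the other way: if $W_{\mathrm{out}}$ is labelled from $m_{\mathrm{out}},v_{\mathrm{out}}$ as you describe, a consistent input already yields $e{+}2$ for $W$ against $e{+}1$ for $m$.

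Your proof survives only because the separate ``mixes histories'' clause does the work, backed by the convention that membership means ``image of the genuine run.'' That convention is narrower than the paper's $\mathsf{State}(e')$, which allows any consistent initialization. So what you actually certify is non-genuineness, not non-uniformity.

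The fix is to compute labels from the step's input tuple in $\mathbb{N}\cup\{\mathsf{mix}\}$. If all inputs of an output component carry label $L$, it gets $L{+}1$; otherwise, or if any input is $\mathsf{mix}$, it gets $\mathsf{mix}$. Every input component feeds at least one output ($W\to W$, $m\to m$, $v\to v$, $g\to m$, $r\to r$, $d\to d$). Hence a non-uniform input forces either a $\mathsf{mix}$ label or two distinct labels in the output. The induction is then genuinely routine and gives non-membership in $\mathsf{State}(e'{+}1)$ for every $e'$ in the paper's sense. Iterating it also yields the paper's claim that the run leaves the valid trajectory space permanently.
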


\begin{proof}
The AdamW update equations are defined on the assumption that $m_e$,
$v_e$, $g_e$, and $W_e$ are causally aligned---each reflects the
cumulative effect of the same sequence of gradient steps.  If
$m_{e-1}$ is substituted for $m_e$, then $m_{e+1} = \beta_1 m_{e-1}
+ (1 - \beta_1) g_{e'}$ (for some $e'$), which corresponds to a
moment estimate that has ``skipped'' an epoch of gradient information.
The bias correction $\hat{m}_{e+1} = m_{e+1} / (1 - \beta_1^{e+1})$
then uses a denominator calibrated for $e{+}1$ steps but a numerator
reflecting only $e{-}1$ steps of accumulation.  The resulting update
direction
\[
  \frac{\hat{m}_{e+1}}{\sqrt{\hat{v}_{e+1}} + \epsilon}
\]
does not correspond to any valid trajectory of AdamW from any
consistent initial state.

Formally: define the \emph{valid trajectory set}
\[
  \mathsf{Traj}(e_0, e) = \bigl\{
    \mathsf{step}^{e-e_0}(\mathcal{T}_{e_0})
    \;\big|\;
    \mathcal{T}_{e_0} : \mathsf{State}(e_0),\;
    \text{all intermediate inputs are valid batches}
  \bigr\}
\]
as the set of all states reachable from \emph{any} epoch-consistent
initialization $\mathcal{T}_{e_0}$ by applying $\mathsf{step}$
exactly $e - e_0$ times over all admissible data sequences.  Since
$\mathsf{step}$ preserves epoch consistency by construction (each
application maps $\mathsf{State}(e') \to \mathsf{State}(e'{+}1)$),
every element of $\mathsf{Traj}(e_0, e)$ has type
$\mathsf{State}(e)$.

A mixed-epoch state $\mathcal{T}_{\mathrm{mixed}}$ has components
drawn from different epochs, so
$\mathcal{T}_{\mathrm{mixed}} \notin \mathsf{Traj}(e_0, e)$ for any
$e_0, e$.  The mixed state lies outside the reachable set of the
optimizer.  Applying $\mathsf{step}$ to it produces a value outside
$\bigcup_{e'} \mathsf{Traj}(e_0, e')$ for any $e_0$---the system
leaves the valid trajectory space permanently.
\end{proof}

\begin{remark}[Silent coercion]
At the implementation level, tensors are untyped arrays of
floating-point numbers.  The type $\mathsf{State}(e)$ is a semantic
type imposed by the training protocol, not enforced by the runtime.
When a mixed-epoch state is loaded, the runtime silently coerces it to
``the current state'' without checking epoch consistency.  This is the
implementation manifestation of the FITO category mistake: the system
\emph{treats} the loaded state as $\mathsf{State}(e)$ because it was
saved at ``time $t_c$,'' but the state is actually
$\mathsf{State}(\vec{e})$ for a heterogeneous epoch vector.

The consequences---divergent training loss, mode collapse, degraded
final model quality---may not manifest for many subsequent steps,
making the corruption epistemically hidden.
\end{remark}

\begin{example}[AdamW with one-epoch moment skew]
\label{ex:adamw}
Suppose a checkpoint for epoch $e = 1000$ recovers with $W_{1000}$
and $v_{1000}$ correctly, but the first moment $m$ reflects epoch
$999$ (one shard failed to persist the latest moment update).  The
next update computes:
\[
  m_{1001} = \beta_1 \, m_{999} + (1 - \beta_1)\, g_{1000}.
\]
Compared to the correct trajectory ($m_{1001} = \beta_1\, m_{1000} +
(1-\beta_1)\, g_{1000}$), the error is
\[
  \Delta m = \beta_1 (m_{1000} - m_{999})
           = \beta_1 (1 - \beta_1) g_{999}.
\]
With $\beta_1 = 0.9$, this is $0.09\, g_{999}$---approximately 9\%
of the previous gradient, injected as a phantom correction.  The
bias-corrected moment $\hat{m}_{1001} = m_{1001} / (1 -
\beta_1^{1001})$ uses a denominator calibrated for 1001 steps of
accumulation, but the numerator now reflects a trajectory that skipped
one gradient application.  The resulting update direction deviates
from valid AdamW by a quantity proportional to $\|g_{999}\| /
(\sqrt{v_{1000}} + \epsilon)$, which is of the same order as the
learning rate step itself.  Over subsequent epochs, this perturbation
compounds through the moment recursion.
\end{example}

\section{Firmware Deployment and Common Knowledge}
\label{sec:ck}

We now turn to the second assumption: atomic infrastructure updates.

\subsection{Atomic deploy as epoch agreement}

Consider an XPU fleet $X = \{X_1, \ldots, X_n\}$ transitioning from
firmware version $F_0$ to $F_1$.  An \emph{atomic deploy} means that
the operational semantics of the cluster transition from $F_0$ to
$F_1$ without any reachable execution in which different nodes
participate in the same distributed collective under different
firmware semantics.

\begin{definition}[Mixed-protocol execution]
A cluster execution is \emph{mixed-protocol} if there exist correct
nodes $X_i, X_j$ and a collective operation instance $C$ such that
$X_i$ executes $C$ under $F_0$ semantics while $X_j$ executes $C$
under $F_1$ semantics.
\end{definition}

\begin{definition}[Atomic deploy]
An atomic deploy protocol satisfies:
\begin{enumerate}[label=(\alph*)]
\item \textbf{Agreement:} No two correct nodes decide different
  firmware epochs.
\item \textbf{Validity:} If all correct nodes initiate the update,
  the decided epoch is $F_1$.
\item \textbf{Termination:} All correct nodes eventually decide.
\end{enumerate}
\end{definition}

This is consensus on the transition epoch.

\subsection{From consensus to common knowledge}

The consensus-number argument from Herlihy~\cite{herlihy_waitfree}
shows that read/write registers have consensus number~1: they cannot
solve consensus for $n \geq 2$ processes.  If the fleet update
protocol uses only read/write messaging (no compare-and-swap, no
external oracle), it cannot guarantee atomic deploy under the failure
conditions that defeat consensus.

We now strengthen this result.

\begin{theorem}[Common knowledge requirement for atomic deploy]
\label{thm:ck}
In an asynchronous distributed system with unreliable communication,
atomic deploy---the absence of any mixed-protocol execution---requires
that the epoch transition become \emph{common knowledge} among all
correct nodes before any node acts under $F_1$ semantics.
\end{theorem}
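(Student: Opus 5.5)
The argument goes through the Halpern--Moses knowledge framework. We model the fleet as an interpreted system whose runs are the admissible executions of the deploy protocol under asynchrony and message loss. Let $\varphi$ be the fact ``the fleet has transitioned to epoch $F_1$.'' $K_i\varphi$ means $\varphi$ holds at every point node $X_i$ cannot distinguish from the current one, $E\varphi = \bigwedge_i K_i\varphi$ over correct nodes, and $C\varphi$ is the greatest fixed point of $X \equiv E(\varphi \wedge X)$. The plan is to show that the atomic-deploy requirement, that no reachable execution is mixed-protocol, forces the event ``$X_i$ acts under $F_1$'' to imply $C\varphi$ at that point. We do not need to prove attainability or unattainability here; the abstract's unattainability claim would follow afterwards from the Halpern--Moses theorem.

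\textbf{Step 1: acting under $F_1$ implies $E\varphi$.} Suppose $X_i$ executes a collective $C$ under $F_1$ at point $(r,m)$. Take any correct $X_j$ and any point $(r',m')$ that $X_j$ cannot distinguish from the relevant point in $r$ at which $X_j$ would participate in $C$. If $\varphi$ failed at $(r',m')$, then $X_j$ would follow $F_0$ there, since its behaviour depends only on its local state. Splicing the two runs gives an admissible run in which $X_i$ runs $F_1$ and $X_j$ runs $F_0$ on the same $C$. This is a mixed-protocol execution, contradicting atomic deploy. The splice is legitimate because, in an asynchronous system with unreliable links, any delay or loss pattern consistent with local views is admissible; this is the same adversarial-scheduler freedom used in Lemma~\ref{lem:straddling}. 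Hence every correct node knows $\varphi$ whenever anyone acts under $F_1$, so the action implies $E\varphi$.

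\textbf{Step 2: closure under iteration.} Let $\psi$ denote ``some correct node is acting, or is entitled to act, under $F_1$ in $C$.'' Step 1 gives the implication $\psi \Rightarrow E(\varphi\wedge\psi)$ across all runs. The key observation is that the argument of Step 1 can be reapplied at every indistinguishable point: if $X_j$ could not be sure that others are entitled to $F_1$, then $X_j$ itself must not be entitled. By the induction rule for common knowledge ($\psi \Rightarrow E(\varphi\wedge\psi)$ valid implies $\psi\Rightarrow C\varphi$), we conclude $C\varphi$ at any point where a node acts under $F_1$. That is exactly the theorem: common knowledge must be reached before any $F_1$ action.

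\textbf{Main obstacle.} The hard part is Step 2, specifically making the entitlement predicate $\psi$ well defined so that it is a property of local states and the fixed-point induction is sound. Collective participation is a joint action, and one must show that the requirement is symmetric: every participant's $F_1$ action demands the same level of knowledge of the others' entitlement. I would first try to define $\psi$ as ``the local state of some correct node satisfies the protocol's $F_1$ guard.'' I would then verify the splicing argument uniformly at every depth of indistinguishability, which is where unreliable communication matters, as in the coordinated-attack argument. To connect the result to the consensus formulation, I would also note that Agreement together with Termination is what rules out runs in which some nodes never learn the decision.
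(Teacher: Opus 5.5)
Your route is the paper's. Step~1 is its observation that if $X_i$ acts under $F_1$ while $X_j$ is ignorant of $\varphi$, then $X_j$ may run $F_0$. Step~2 replaces its ``and so on'' with the induction rule for $C$. Formalizing the argument, however, exposes a gap that the paper's chain also passes over, and you do not close it. The induction rule needs $\psi \Rightarrow E(\varphi\wedge\psi)$ to hold \emph{at a single point}. Step~1, even repaired (see below), only shows that $X_j$ knows $\varphi$ at the point where $X_j$ itself takes its step of $C$, not at the point $(r,m)$ where $X_i$ acts. Your sentence ``hence every correct node knows $\varphi$ whenever anyone acts under $F_1$'' silently identifies the two, and under asynchrony they differ. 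This matters, as a reliable-channel example shows. Take reliable asynchronous channels and no crashes. A coordinator collects inputs and announces ``from collective $k$ on, run $F_b$'', with $b=1$ iff all nodes initiated; no node starts collective $k$ before receiving the announcement. There is never a mixed-protocol execution, and Agreement, Validity and Termination hold. Yet when the first node runs collective $k$ under $F_1$, a node whose announcement is still in transit does not know $\varphi$, so $E\varphi$ (\emph{a fortiori} $C\varphi$) fails at that point. Message loss plays no role in your step from ``$X_j$ knows $\varphi$ when it participates'' to ``$E\varphi$ at $(r,m)$''. So your argument would equally ``prove'' this false reliable-channel claim. What iterating Step~1 can honestly deliver is eventual common knowledge $C^{\diamond}\varphi$. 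The missing hypothesis is simultaneity. Suppose executing $C$ is a joint event at which all participants act at the same point, and put $\psi :=$ ``all participants are executing $C$ under $F_1$ now''. Local determinism plus simultaneity give $\psi\Rightarrow E\psi$ directly, atomicity gives $\psi\Rightarrow\varphi$, and the induction rule gives $\psi\Rightarrow C\varphi$. That is the Halpern--Moses simultaneous-action argument. It needs no local ``entitlement'' predicate, so the obstacle you flag is not the real one.

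Step~1 also needs local repairs. Its contradiction is set up backwards. $X_j$'s local state at $(r',m')$ equals its state at its participation point in $r$, where atomic deploy forces it to run $C$ under $F_1$. Determinism therefore makes it run $F_1$ at $(r',m')$ as well; it cannot ``follow $F_0$ there''. The contradiction must come from $r'$ itself, which requires $\varphi$ to be defined by actions, e.g.\ $\varphi :=$ ``no correct node executes $C$ under $F_0$''. With $\varphi$ read as ``the fleet has transitioned'', $\neg\varphi$ at $(r',m')$ produces no $F_0$ participant. With the action-based $\varphi$, $\neg\varphi$ at $r'$ makes $r'$ itself mixed, and no splice is needed. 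That is just as well, because the splice as you justify it is invalid. Local views taken from two different runs need not be jointly realizable: whatever $X_i$ received from $X_j$ in $r$ must have been sent in the spliced run, and asynchrony with loss does not change this. Finally, Step~1 presupposes that $X_j$ participates in $C$ at all. Under the safety-only reading in the statement (``the absence of any mixed-protocol execution''), the blocking protocol above avoids mixed executions even over lossy links, where $C\varphi$ is never attained. It does so simply because uninformed nodes never reach collective $k$. So Termination has to be used inside Step~1, not only in your closing remark.
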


\begin{proof}
Following Halpern and Moses~\cite{halpern_moses1990}, common knowledge
of a fact $\varphi$ in a group $G$ is defined as:
\[
  C_G(\varphi) \;\equiv\;
  E_G(\varphi) \wedge E_G(E_G(\varphi)) \wedge
  E_G(E_G(E_G(\varphi))) \wedge \cdots
\]
where $E_G(\varphi)$ means ``everyone in $G$ knows $\varphi$.''

For atomic deploy, let $\varphi$ = ``the transition to $F_1$ is in
force.''  Suppose node $X_i$ acts under $F_1$ while some node $X_j$
does not yet know that the transition is in force.  Then $X_j$ may
still act under $F_0$ in a collective involving $X_i$, producing a
mixed-protocol execution.  To prevent this, $X_i$ must know that $X_j$
knows $\varphi$; but $X_j$ must also know that $X_i$ knows $\varphi$
(otherwise $X_j$ cannot be certain that $X_i$ won't revert); and so
on.  This infinite chain of mutual knowledge is exactly common
knowledge.

Halpern and Moses~\cite{halpern_moses1990} proved that common
knowledge cannot be attained in systems where communication is not
guaranteed (messages may be lost or arbitrarily delayed).  Since our
failure model includes crash-recovery and asynchronous communication,
common knowledge of the epoch transition is unattainable.

Therefore, atomic deploy---which requires common knowledge of the
transition---is impossible under these assumptions.
\end{proof}

\begin{corollary}
Phased rollouts, backward-compatible protocol versions, and
pool-based isolation are not merely engineering compromises.  They are
\emph{theoretically necessary} relaxations of an unattainable
property.  Any system that claims to perform ``atomic'' firmware
deployment under asynchrony is either using stronger primitives than
read/write messaging (e.g., a consensus service) or is making an
implicit FITO assumption that the transition ``happens at time
$t_{\mathrm{deploy}}$.''
\end{corollary}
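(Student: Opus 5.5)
The corollary has two parts. The first says phased rollouts and similar relaxations are theoretically necessary. The second is a dichotomy about any system claiming atomic deployment. I would derive both from Theorem~\ref{thm:ck} together with the consensus-number argument preceding it, by contraposition.

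\textbf{Step 1: necessity of relaxation.} Fix the model of Section~\ref{sec:ck}: asynchronous, crash-recovery, communication only through read/write messaging. By Theorem~\ref{thm:ck}, any protocol whose executions contain no mixed-protocol execution must make the epoch transition common knowledge among correct nodes before any node acts under $F_1$. Halpern--Moses says this is unattainable here. Hence every terminating deployment protocol in this model has some reachable mixed-protocol execution, or else never lets any node act under $F_1$; the latter violates Validity and Termination.

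Therefore any protocol that actually deploys $F_1$ must tolerate mixed-protocol executions. I would then classify the three named techniques as exactly ways of making such executions harmless rather than absent:
\begin{itemize}
\item backward compatibility makes $F_0$ and $F_1$ semantics interoperable within a collective $C$;
\item pool-based isolation prevents any collective from spanning nodes on different firmware;
\item phased rollout bounds the set of nodes in the mixed region.
\end{itemize}
Each weakens the definition of atomic deploy, either by restricting which collectives count or by redefining the semantic equivalence. So each is a relaxation of an unattainable property rather than a compromise on an attainable one.

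\textbf{Step 2: the dichotomy.} Suppose a system claims atomic deployment under asynchrony. Either its assumptions lie outside the model of Theorem~\ref{thm:ck}, or they lie inside it.
\begin{itemize}
\item If outside, the only relevant escape hatches are stronger primitives. These are objects of consensus number $\ge n$ in Herlihy's hierarchy, such as compare-and-swap or an external consensus service; with them, Agreement becomes achievable relative to that service.
\item If inside, the claim contradicts Theorem~\ref{thm:ck}. The only way it can nonetheless be asserted is by substituting a temporal predicate, ``the transition happens at $t_{\mathrm{deploy}}$,'' for the protocol property of common knowledge. That is exactly the $\mathsf{Snap}$/$\mathsf{Conv}$ identification of Proposition~\ref{prop:category}, here with firmware epochs in place of checkpoint epochs. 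It is an implicit FITO assumption.
\end{itemize}

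\textbf{Main obstacle.} The delicate step is making the dichotomy exhaustive. A system could also escape by assuming partial synchrony or reliable channels. I would handle this by folding such assumptions into the FITO branch: a bounded-delay assumption is precisely the belief that a global instant $t_{\mathrm{deploy}}$ exists after which all nodes know $\varphi$. This needs an explicit remark that synchrony assumptions are temporal predicates in the sense of Definition~\ref{def:fito}. I also have to be careful that the consensus-service branch gives Agreement but not common knowledge in the strict sense. I would phrase that branch as relocating the impossibility into the service's own assumptions, rather than as contradicting Theorem~\ref{thm:ck}.
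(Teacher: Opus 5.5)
Your proposal is correct and follows essentially the paper's route. The paper gives no separate proof; it reads the corollary directly off Theorem~\ref{thm:ck} and the preceding Herlihy consensus-number paragraph. Your caveat that the consensus-service branch yields agreement but not strict common knowledge is exactly the content of the subsequent Remark on practical relaxations ($\epsilon$-common knowledge plus fencing). One point to tighten: as you describe pool-based isolation (no collective ever spans nodes on different firmware), it would make mixed-protocol executions absent rather than harmless, which contradicts your own Step~1. You should add that enforcing pool boundaries either delegates agreement to an external orchestrator (your stronger-primitives branch) or tolerates residual mixing when membership views are stale.
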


\begin{remark}[Practical relaxations]
Theorem~\ref{thm:ck} establishes strict impossibility.  In practice,
systems can approximate atomic deploy by introducing stronger
coordination primitives.  If the cluster maintains a consensus service
(Raft, Paxos) for metadata, the epoch transition can be linearized
through that service, achieving what we might call \emph{$\epsilon$-common
knowledge}: every correct node that participates in any collective
after the decision point has observed the committed transition.  Nodes
that crash before observing the decision are fenced out of
subsequent collectives.  This is strictly weaker than Halpern-Moses
common knowledge (it does not provide the infinite mutual-knowledge
chain) but suffices for practical safety if the fencing mechanism is
reliable.  The gap between common knowledge and $\epsilon$-common
knowledge is precisely the window during which mixed-protocol
execution remains possible---a window that can be narrowed but not
eliminated under asynchrony.
\end{remark}

\begin{remark}[Why this matters for AI collectives]
In tightly-coupled AI training, collective operations (AllReduce,
AllGather) involve all participants in a single logical operation.  A
single node executing under different firmware semantics can corrupt a
reduction.  Unlike loosely-coupled microservices, where mixed-version
execution may cause graceful degradation, mixed-firmware AI collectives
can produce \emph{silent numerical corruption}: the AllReduce
completes without error, but the aggregated gradient is semantically
invalid.  This is the optimization-level analogue of the type violation
in Section~\ref{sec:type}.
\end{remark}

\section{Retry Amplification as FITO Feedback}
\label{sec:retry}

Checkpoint-and-restart is the dominant recovery mechanism in large-scale
training.  But the retry mechanism embeds a FITO assumption: it treats
the failure as a \emph{temporal event} (``the checkpoint failed at time
$t$'') and the recovery as a \emph{temporal restart} (``resume from the
last good checkpoint at time $t'$'').

This creates a positive feedback loop:

\begin{enumerate}
\item A checkpoint failure triggers a retry (restart from the
  previous checkpoint).
\item The retry imposes additional load on the storage and network
  subsystems (re-reading the full checkpoint, re-initializing
  collectives).
\item The additional load increases the probability of further
  failures (storage timeouts, network congestion, cascading
  stragglers).
\item Further failures trigger further retries.
\end{enumerate}

This loop is well-documented in practice: Google explicitly designs
multi-tier checkpointing to mitigate retry
overhead~\cite{google_multitier_blog}, and AWS treats checkpoint
storage architecture as foundational to training
reliability~\cite{aws_checkpoint_blog}.

The FITO structure of the loop is that each retry assumes the previous
failure was an isolated temporal event.  But failures during
checkpointing are not isolated events; they are \emph{symptoms of
non-convergence} of the checkpoint protocol.  Retrying a
non-convergent protocol under the same conditions that caused
non-convergence is a structural error, not a recovery strategy.

\section{Toward Convergence-Based Atomicity}
\label{sec:constructive}

The preceding sections establish that the FITO approach to
checkpointing and firmware deployment is structurally unsound.  We now
sketch a constructive alternative.

\subsection{Bilateral checkpoint commit}

Inspired by the bilateral link semantics of Open Atomic
Ethernet~\cite{borrill_icloud}, we propose a checkpoint protocol in
which convergence replaces temporal boundary:

\begin{definition}[Bilateral checkpoint protocol]
A bilateral checkpoint protocol for epoch~$e$ proceeds in two phases:
\begin{enumerate}[label=(\alph*)]
\item \textbf{Tentative persist.}  Each component $D_i$ persists its
  local state for epoch~$e$ to stable storage and sends a
  \textsf{ready}$(i, e)$ acknowledgment to a coordinator.  The
  acknowledgment is \emph{reflective}: it carries a cryptographic hash
  or content summary of the persisted state, providing evidence of
  what was committed, not merely that an operation occurred.
\item \textbf{Commit or rollback.}  The coordinator collects
  acknowledgments.  If \textsf{ready}$(i, e)$ is received from all
  $i \in N$ within a timeout:
  \begin{itemize}
  \item The coordinator broadcasts \textsf{commit}$(e)$.
  \item Each component marks epoch~$e$ as the stable checkpoint.
  \end{itemize}
  Otherwise:
  \begin{itemize}
  \item The coordinator broadcasts \textsf{rollback}$(e)$.
  \item Each component discards the tentative persist and retains
    epoch~$e{-}1$.
  \end{itemize}
\end{enumerate}
\end{definition}

\begin{proposition}
The bilateral checkpoint protocol achieves
$\mathsf{Conv}(\mathcal{P}, e)$ without requiring
$\mathsf{Snap}(t, e)$.  The commit point is a \emph{protocol
event}---the coordinator's decision---not a clock event.
\end{proposition}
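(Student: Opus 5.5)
The plan is to prove the two halves of the statement separately, each directly from the definitions in Section~\ref{sec:fito}. For the first half, that the protocol achieves $\mathsf{Conv}(\mathcal{P},e)$, I will exhibit an execution prefix $\alpha$ of the bilateral protocol after which every $D_i$ is in $\mathsf{Committed}(e)$, and then argue stability under the failure model. The prefix $\alpha$ consists of three parts. First come all tentative persists, each terminated by a durable write: the local action sequence of the persistence process finishing in its \textsf{committed} sub-state, with fsync completed. Next, the coordinator receives all $n$ reflective \textsf{ready}$(i,e)$ messages. Finally, the coordinator durably logs and broadcasts \textsf{commit}$(e)$, and each $D_i$ durably marks epoch~$e$ stable. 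Such a prefix is admissible because asynchrony allows any finite interleaving, and the timeout can simply not fire.

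For stability I will do a case analysis on a crash of $D_i$ after $\alpha$. On recovery, $D_i$ reads stable storage, finds the durable stable-marker for~$e$, and therefore reports $e$. A crash of the coordinator after it has logged its decision likewise recovers the decision. The crash cases that fall between the tentative persist and the marker are the ones I expect to be the main obstacle. Here the three-valued epistemic model puts $D_i$ in $\bot$, and a bare two-phase commit is known to block. I will resolve this as follows. A component in $\bot$ does not self-classify. Instead it queries the coordinator's durable decision log, and the reflective hash lets it verify that its tentative state matches what was acknowledged. It then adopts $e$ on \textsf{commit} or $e{-}1$ on \textsf{rollback}. This turns the ternary state into a binary one and rules out mixed states of the kind counted in Section~\ref{sec:lattice}. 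I should be explicit that this argument assumes the coordinator's log is stable storage, so that the coordinator is itself a crash-recovery process whose decision survives. Because of this, the claim is a safety-plus-eventual-convergence statement and not wait-freedom. The construction is consistent with Herlihy's bound because the coordinator supplies the stronger primitive noted in the remarks of Section~\ref{sec:ck}.

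For the second half, that $\mathsf{Snap}(t,e)$ is not required, I will observe that the argument above never refers to clock time. The commit point is defined as the coordinator's decision event, which is a position in the trace and not a value in $\mathbb{T}$. This matches Proposition~\ref{prop:trace}. I will then note that the protocol remains correct on the straddling executions of Lemma~\ref{lem:straddling}. At any chosen $t_c$ the components may be in mixed tentative states, so $\mathsf{Snap}(t_c,e)$ fails, yet $\mathsf{Conv}(\mathcal{P},e)$ still holds. Theorem~\ref{thm:no-boundary} shows that no such $t_c$ need exist at all. Hence convergence is obtained even in executions where $\mathsf{Snap}(t,e)$ is false for every $t$. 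That establishes non-dependence, and it is the sense in which the two notions are of different logical types (Proposition~\ref{prop:category}).
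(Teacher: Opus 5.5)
Your first half is sound but organized differently from the paper's. The paper splits on the coordinator's decision, which gives all-or-nothing over every execution in a few lines:
\begin{itemize}
\item \textsf{commit}$(e)$ is broadcast only after every \textsf{ready}$(i,e)$ has arrived, hence only after every component has durably persisted $e$;
\item a failure before \textsf{ready} forces \textsf{rollback}$(e)$ and leaves everyone at $e{-}1$;
\item the deciding event is a position in the trace, not a clock value.
\end{itemize}
You instead instantiate the existential in $\mathsf{Conv}(\mathcal{P},e)$ with a failure-free prefix $\alpha$. This witness alone says nothing specific to the bilateral protocol, because a failure-free run of the uncoordinated $P_1\|\cdots\|P_n$ witnesses the same existential. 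The content that distinguishes the protocol is in your crash analysis. There you are more careful than the paper. Its proof says nothing about a component that crashes after sending \textsf{ready} but before learning the decision, which is exactly the $\bot$ case of its own failure model. It is also silent about coordinator crashes. Your recovery rule (query a durable decision log, check the reflective hash) is what makes the case split actually deliver all-or-nothing. The cost is an assumption not in the protocol's definition, and a blocking rather than wait-free guarantee, as you acknowledge. You should also treat a coordinator that crashes before logging its decision. On recovery it must decide \textsf{rollback}, which is safe because \textsf{commit} is broadcast only after logging.

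One step in your second half is false and should be cut: that convergence is obtained ``even in executions where $\mathsf{Snap}(t,e)$ is false for every $t$.'' Theorem~\ref{thm:no-boundary} has the quantifiers the other way round. For each boundary function $t_c$, some admissible schedule is mixed at $t_c(e)$. It does not produce one execution that is mixed at every instant. In fact the claim fails in every converging execution. Once the prefix witnessing $\mathsf{Conv}(\mathcal{P},e)$ is complete (in your $\alpha$, once the last component has durably marked $e$), every $S_i(t)\in\mathsf{Committed}(e)$. By the stability clause, $\mathsf{Snap}(t,e)$ then holds at every later $t$ of that execution. What ``without requiring $\mathsf{Snap}(t,e)$'' needs is only your preceding sentence, read correctly. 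For each instant fixed in advance, Lemma~\ref{lem:straddling} gives an execution that still converges but in which $\mathsf{Snap}$ fails at that instant. Your correctness argument never refers to the instant at which $\mathsf{Snap}$ first becomes true; that instant exists, but it depends on the schedule and plays no role in the protocol.
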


\begin{proof}
If the coordinator decides \textsf{commit}$(e)$, then by construction
every component has sent \textsf{ready}$(i, e)$, meaning every
component has durably persisted epoch~$e$.  The committed state is
determined by the protocol trace (reception of all acknowledgments),
not by a temporal boundary.

If any component fails before sending \textsf{ready}, the coordinator
decides \textsf{rollback}, and the system remains consistently at
epoch~$e{-}1$.

The protocol achieves convergence (all-or-nothing for epoch~$e$) as a
property of its execution, not as a property of any clock time.
\end{proof}

\begin{remark}
This is structurally a two-phase commit protocol.  The contribution is
not the protocol itself but the \emph{reframing}: what was informally
understood as ``we need 2PC for checkpoints'' is here derived from the
FITO analysis.  The need for an explicit convergence protocol is a
\emph{consequence} of the category mistake---systems that assume
$\mathsf{Snap}(t)$ do not implement such protocols because they
believe temporal boundaries suffice.
\end{remark}

\subsection{Epoch transition via consensus}

For firmware deployment, the common knowledge impossibility
(Theorem~\ref{thm:ck}) shows that pure messaging cannot achieve
atomic deploy.  However, if the cluster already maintains a consensus
service (Paxos, Raft) for metadata management---as most large
training clusters do---the epoch transition can be mediated by
consensus:

\begin{enumerate}
\item A \textsf{propose-transition}$(F_1)$ is submitted to the
  consensus service.
\item The service commits the transition, producing a linearized
  decision visible to all participants.
\item Each node, upon observing the committed decision, transitions
  to $F_1$ before participating in any subsequent collective.
\end{enumerate}

This lifts the consensus number from 1 (read/write messaging) to the
level of the consensus service (effectively~$\infty$ for
compare-and-swap-based implementations).  It does not achieve common
knowledge in the strict Halpern-Moses sense, but it provides
\emph{practical common knowledge}: any node that has observed the
committed decision can act under $F_1$, and any node that has not yet
observed it will either observe it before the next collective or be
fenced out.

\subsection{Validation checkpoint}

As a defense against semantic type violations
(Section~\ref{sec:type}), we propose a \emph{validation step} after
checkpoint recovery:

\begin{enumerate}
\item After loading the checkpoint, run a single forward pass on a
  held-out validation batch.
\item Compare the validation loss to the value recorded at checkpoint
  time.
\item If the loss deviates beyond a threshold $\delta$, the
  checkpoint is flagged as potentially inconsistent and the system
  falls back to an earlier epoch.
\end{enumerate}

This does not guarantee detection of all type violations (the loss
landscape may be locally flat around the corrupted state), but it
provides a practical check that the loaded state is in the valid
trajectory space $\mathsf{Traj}(e_0, e)$.

\subsection{FITO to constraint semantics: the general correction}

These three proposals---bilateral commit, consensus-mediated
transition, and validation checkpoint---share a common structure.
Each replaces a FITO assumption (temporal boundary, temporal deploy
event, temporal state identity) with a \emph{constraint}:

\begin{center}
\begin{tabular}{@{}lll@{}}
\toprule
FITO assumption & Constraint replacement & Mechanism \\
\midrule
$\mathsf{Snap}(t)$ & $\mathsf{Conv}(\mathcal{P}, e)$ &
  Bilateral commit \\
$\mathsf{Deploy}(t)$ & Consensus on epoch & Raft/Paxos \\
$\mathsf{State}(e)$ at load time & Validation invariant &
  Forward pass check \\
\bottomrule
\end{tabular}
\end{center}

This table is a specific instance of the general correction from FITO
to constraint semantics described
in~\cite{borrill_message_passing}: replacing temporal predicates with
relational constraints that can be verified by protocol execution.

\section{Related Work}
\label{sec:related}

\paragraph{Consistent snapshots.}
Chandy and Lamport~\cite{chandy_lamport} established the theory of
consistent global snapshots under FIFO channels without failures.
Subsequent work extended this to non-FIFO channels and partial failure
models.  Our contribution differs in that we show checkpoint
\emph{atomicity} (all-or-nothing commitment) requires strictly
stronger guarantees than snapshot \emph{consistency} (causal
coherence).

\paragraph{AI/ML checkpointing systems.}
ByteCheckpoint~\cite{bytecheckpoint} provides parallelism-agnostic
checkpoint representation; Gemini~\cite{gemini_sosp23} enables
in-memory checkpointing with inter-machine backup;
DataStates-LLM~\cite{datastates_llm} achieves asynchronous
checkpointing during forward/backward passes; Universal
Checkpointing~\cite{universal_checkpoint} enables continuation across
different parallelism configurations.  These systems optimize
checkpoint \emph{efficiency} but do not address the atomicity
question: they assume that the persistence of all shards constitutes
a consistent snapshot, which is the FITO assumption we challenge.

\paragraph{Training reliability at scale.}
Kokolis et al.~\cite{kokolis_reliability} analyze 150~million A100
GPU-hours across Meta research clusters and report failure rates that
make checkpointing critical.  Meta's Llama~3 training report
documents 419 interruptions in 54 days on 16,384 H100
GPUs~\cite{llama3_report}.  The OPT-175B training
logbook~\cite{opt175b_logbook} records 35 restarts from hardware
failures affecting 39.3\% of VMs.  The BLOOM 176B-parameter
training~\cite{bloom_training} reported 1--2 GPU failures per week
across 400 GPUs, with framework hanging issues at scale.
FlashRecovery~\cite{flashrecovery} achieves single-step restoration
on clusters with 4800 devices, demonstrating the engineering urgency
of the problem.  These reports confirm the empirical setting in which
our theoretical results apply.

\paragraph{Silent data corruption.}
Hochschild et al.~\cite{meta_sdc} document silent data corruption in
large-scale compute infrastructure.  He et al.~\cite{sdc_llm_training}
analyze SDC specifically in LLM training, reporting that Google
observed SDC events every 1--2 weeks during Gemini training.
SDC events occur without failure signals, further undermining the
assumption that a ``successful'' checkpoint necessarily reflects
correct state.  Our epoch lattice
formulation (Section~\ref{sec:lattice}) subsumes SDC as a source of
$\bot$ (ambiguous) states.

\paragraph{Consensus and common knowledge.}
Herlihy~\cite{herlihy_waitfree} established the consensus hierarchy.
Fischer, Lynch, and Paterson~\cite{flp1985} proved the impossibility
of consensus in asynchronous systems with a single faulty process.
Halpern and Moses~\cite{halpern_moses1990} connected simultaneous
action to common knowledge and showed that common knowledge is
unattainable under unreliable communication.  Moses and
Tuttle~\cite{moses_common_knowledge} applied common knowledge to the
firing squad problem.  Coan et al.~\cite{coan_firing_squad}
established bounds on the firing squad under Byzantine faults.
Charron-Bost~\cite{charron_bost_firing}
revisited the firing squad in the context of self-stabilizing systems.
Lynch~\cite{lynch_distributed} provides a comprehensive treatment of
distributed algorithms including consensus and agreement protocols.
Our Theorem~\ref{thm:ck} applies the Halpern-Moses result to the
specific problem of firmware fleet updates.

\paragraph{Fleet update protocols.}
Van der Burg~\cite{vanderburg_atomic} formalizes atomic upgrading of
distributed systems, addressing the requirement that service updates
appear atomic to clients.  Kubernetes implements rolling updates with
configurable surge and unavailability
parameters~\cite{kubernetes_rolling}, which is a practical relaxation
of the atomic deploy property.  Our contribution is to connect these
engineering practices to the formal impossibility landscape.

\paragraph{The FITO programme.}
This paper is part of a series identifying FITO category mistakes
across distributed systems.  Borrill~\cite{borrill_unix_tools} shows
the FITO error in Unix persistence semantics.
Borrill~\cite{borrill_icloud} demonstrates it in cloud
synchronization.  Borrill~\cite{borrill_message_passing} formalizes
the general constraint-semantic correction and proves an equivalence
between message-passing executions and constraint satisfaction
problems.  The Semantic Arrow of Time series~\cite{borrill_arrow} develops the
broader philosophical and physical foundations of the FITO critique,
connecting to Lamport's logical clocks~\cite{lamport1978,lamport_chapter}
and indefinite causal order.
The present paper extends the analysis to AI/ML
infrastructure, where the economic and computational stakes are
highest.

\section{Conclusion}
\label{sec:conclusion}

We have shown that two foundational assumptions of AI/ML
infrastructure---atomic checkpoints and atomic firmware
deployment---are instances of the FITO category mistake: the confusion
of protocol convergence properties with temporal predicates.

The theoretical results are:

\begin{enumerate}
\item Checkpoint atomicity is a trace property of a distributed
  protocol, not a state predicate at a clock time
  (Proposition~\ref{prop:trace}).

\item No temporal boundary can guarantee atomicity under asynchronous
  crash-recovery failures (Theorem~\ref{thm:no-boundary}).

\item Atomicity is a measure-zero event on the epoch lattice as the
  number of persistence domains grows
  (Theorem~\ref{thm:measure-zero}).

\item Mixed-epoch recovery is a type violation in the optimization
  algebra (Theorem~\ref{thm:type}).

\item Atomic firmware deployment requires common knowledge, which is
  unattainable under asynchrony (Theorem~\ref{thm:ck}).
\end{enumerate}

The constructive implication is that atomicity must be explicitly
constructed as protocol convergence---through bilateral commit,
consensus-mediated transitions, and validation checkpoints---rather
than assumed as a temporal property.  This is the specific application,
to AI/ML infrastructure, of the general correction from FITO to
constraint semantics.


\section*{Acknowledgments}

The ideas in this paper grew out of 25~years of research into failure
semantics and data recovery, beginning with emergency recovery operations
following September~11, 2001, while the author served as VP/CTO at VERITAS
Software.  The programme continued through two subsequent ventures---REPLICUS
(distributed replication) and Earth Computing (deterministic networking)---and
now at D{\AE}D{\AE}LUS.  Many of these results were shaped by conversations
on the \emph{It's About Time!} podcast (ItsAboutTime.club) and the
\emph{Mulligan Stew} series with physicists, mathematicians, and
distributed-systems engineers.

This paper was prepared with research assistance from Claude (Anthropic) and
ChatGPT (OpenAI), used as tools for literature review, structural refinement,
and exploration of counterarguments.  All intellectual content, formal results,
and conclusions are the author's own; all errors are solely his responsibility.

\end{document}